\newcommand{\sgn}{\operatornamewithlimits{sgn}}
\newcommand{\argmin}{\operatornamewithlimits{argmin}}
\newcommand{\sat}{\operatornamewithlimits{sat}}
\newcommand{\atan}{\operatornamewithlimits{atan2}}
\newcommand{\diagonal}{\operatornamewithlimits{diag}}
\newcommand{\dom}{\operatornamewithlimits{dom}}
\newtheoremstyle{preprintstyle}
  {10pt plus 2pt minus 2pt}
  {10pt plus 2pt minus 2pt}
  {\itshape}
  {}
  {\color{secondary}\bfseries}
  { - }
  {4pt}
  {}
\theoremstyle{preprintstyle}
\newtheorem{theorem}{\textbf{Theorem}}
\newtheorem{lemma}{\textbf{Lemma}}
\newtheorem{assumption}{\textbf{Assumption}}
\title{A Semi-global Hybrid Sensorless Observer for Permanent Magnets Synchronous Machines with Unknown Mechanical Model} 
\author{Alessandro~Bosso
		\thanks{ A. Bosso and A. Tilli are within the research group Advanced Control and Technologies for Enhanced Mechatronics and Automation (\href{https://dei.unibo.it/en/research/research-groups/actema}{\color{secondary} ACTEMA}).}
		 \And Ilario A. Azzollini
		 \thanks{I. A. Azzollini is with the Center for Research on Complex Automated Systems "Giuseppe Evangelisti" ({\color{secondary} CASY}).
		 \newline ACTEMA and CASY are at the Department of Electrical, Electronic and Information Engineering (\href{https://dei.unibo.it/en}{\color{secondary} DEI}), University of Bologna, Viale Risorgimento 2, 40136 Bologna, Italy.
		\newline Emails: [\href{mailto:alessandro.bosso3@unibo.it}{alessandro.bosso3},
		\href{mailto:ilario.azzollini@unibo.it}{ilario.azzollini},
		\href{mailto:andrea.tilli@unibo.it}{andrea.tilli}]@unibo.it
		}
		\And Andrea~Tilli \footnotemark[1]
		 }
\begin{document}
\maketitle

\begin{abstract}                
In this paper, we present a hybrid sensorless observer for Permanent Magnets Synchronous Machines, with no a priori knowledge of the mechanical dynamics and without the typical assumption of constant or slowly-varying speed.
Instead, we impose the rotor speed to have a constant (unknown) sign and a non-zero magnitude at all times.
For the design of the proposed scheme, meaningful Lie group formalism is adopted to describe the rotor position as an element of the unit circle.
This choice, however, leads to a non-contractible state space, and therefore it introduces topological constraints that complicate the achievement of global/semi-global and robust results.
In this respect it is shown that the proposed observer, which employs a clock to periodically reset the estimates, is semi-globally practically asymptotically stable, and thus it improves a continuous-time version designed under the same assumptions.
As highlighted in the simulation results, the novel hybrid strategy leads to enhanced transient performance, notably without any modification of the gains employed in the continuous-time solution. 
These features motivate to augment the observer with a discrete-time identifier, leading to significantly faster rotor flux reconstruction.
\end{abstract}

\keywords{Nonlinear Observers and Filter Design \and Stability of Hybrid Systems \and Lyapunov Methods \and Input-to-State Stability}

\section{Introduction}

Permanent Magnets Synchronous Machines (PMSMs) are nowadays widely adopted in several fields, ranging from vehicle propulsion to industrial motion applications.
In many contexts, rotor position and speed information is required to achieve accurate regulation, yet the presence of mechanical sensors may pose reliability and economic issues.
Furthermore, such sensors often result impractical because of space and weight requirements, e.g. in small/medium size electrically-powered Unmanned Aerial Vehicles (UAVs).
In this respect, the so-called sensorless control techniques aim to replace the use of mechanical sensors with suitable reconstruction algorithms, and have been the subject of extensive research efforts.
Several design strategies from nonlinear control theory have been applied to sensorless control.
To name a few, without intending to be exhaustive, we recall Extended Kalman Filters \cite{hilairet2009speed}, Sliding Mode \cite{lee2013design} and High Gain and Adaptive strategies \cite{montanari2006speed, marino2008an, khalil2009speed}.
Recently, some works have been dedicated to stator resistance estimation \cite{verrelli2018persistency} and Interior Permanent Magnets Synchronous Machines (IPMSMs) \cite{ortega2019globally}.\\
In the field of sensorless control and observation, the capability of dealing with highly variable speed, with little to no a priori knowledge of the mechanical dynamics, becomes crucial to achieve high-end, high precision algorithms when motors are coupled with nonlinear time-varying loads.
This is the case, e.g., for the electric propulsion of UAVs or Hybrid Electric Vehicles (HEVs), where the environmental conditions heavily affect the external torque.
In \cite{bobtsov2015robust}, an observer performing voltage and current integration is used to reconstruct position and rotor flux amplitude, independently of the mechanical model.
In \cite{tilli2019towards}, a simple sixth-order observer with unknown mechanical model is designed employing a unit circle representation for the rotor angular configuration, and in \cite{bosso2020robust} such design is extended to include resistance estimation, combined with appropriate signal injection techniques to ensure observability.
Indeed, the unit circle (indicated with $\mathbb{S}^1$) is a compact abelian Lie group, and Lyapunov-based tools can be used to derive a simple stability analysis.
Specifically, the algorithm in \cite{tilli2019towards} exploits a high gain observer to reconstruct the back-Electromotive Force (back-EMF), which is then used to set up an adaptive attitude estimator on $\mathbb{S}^1$.
The resulting reduced-order dynamics, corresponding to the attitude observer reconstruction error, is shown to evolve on the cylinder $\mathbb{S}^1\times\mathbb{R}$.\\
The use of a compact Lie group representation, however, introduces some relevant challenges.
In fact, it is known that when a dynamical system evolves on a manifold that is not diffeomorphic to any Euclidean space, it is impossible for a continuous vector field to globally asymptotically stabilize an equilibrium point \cite{mayhew2011quaternion}.
This phenomenon clearly arises in \cite{tilli2019towards} since two isolated hyperbolic equilibria are present, a stable node/focus and a saddle point: this restricts the basin of attraction of the reduced-order dynamics to $(\mathbb{S}^1\times\mathbb{R})\backslash\mathcal{R}_\mathcal{U}$, where $\mathcal{R}_\mathcal{U}$ is a curve passing through the saddle equilibrium.
This property directly affects the full-order observer and therefore only regional stability can be established.
We report the attitude observers on $SO(3)$ studied in \cite{mahony2008nonlinear}, which display a similar behavior in a higher dimensional context.
Notably, an attempt to break this kind of topological constraints with discontinous and memoryless feedback laws leads to non-robust solutions, causing in practice chattering behaviors.
Indeed, it is known that a dynamic hybrid feedback law must be employed in order to achieve global \textit{and} robust results \cite{sontag1999clocks, mayhew2011quaternion}.\\
In this work, we introduce a hybrid modification to the position, speed and flux observer of \cite{tilli2019towards} with the aim of establishing semi-global instead of regional stability.
An alternative to the continuous-time strategy is possible because both components of the back-EMF vector are available as indirect measurement, thus allowing to detect when the angular estimation error settles to a wrong configuration.
Exploiting this fact, we introduce a simple strategy based on a clock to periodically reset the position reconstruction.
The rotor speed is restricted to have a constant (unknown) sign and to be bounded in norm from above and below by positive scalars.
These conditions are compatible with many applications, including the control of propeller motors, where the sign of speed is usually not reversed.
The properties of the observer are highlighted by means of two time scales arguments (see e.g. \cite{teel2003unified, sanfelice2011singular}), and numerically compared to the structure in \cite{tilli2019towards}.
In particular, we underline the interesting feature that if the same tuning gains are adopted for the observer flows, the new algorithm displays a consistently faster transient response.
Finally, inspired by these enhanced convergence properties, we also propose an augmentation based on a discrete-time identifier to further boost the estimation performance, clearly at the expense of increased computational complexity.\\
The structure of the paper is the following.
After a brief introduction to the mathematical background in Section \ref{sec:notation}, we formally state the observer problem in Section \ref{sec:problem}.
The observer structure and its stability properties are presented in Section \ref{sec:main_result}, while in Section \ref{sec:conclusions} some concluding remarks and future research directions are outlined.

\section{Notation} \label{sec:notation}

\par We use $(\cdot)^T$ to denote the transpose of real-valued matrices.
For compactness of notation we often indicate with $(v, w)$, for any pair of column vectors $v$, $w$, the concatenated vector $(v^T, w^T)^T$.
In case of non-differentiable signals, the upper right Dini derivative, indicated with $D^+$, is employed as generalized derivative. The time argument of signals will be omitted when clear from the context.

\subsection{Notation on the Unit Circle ($\mathbb{S}^1$)}

We employ the unit circle $\mathbb{S}^1$ to represent reference frames involved in the manipulation of PMSM equations, as in \cite{tilli2019towards}. In particular, $\mathbb{S}^1$ is a compact abelian Lie group, with the planar 2-D rotation employed as group operation. An integrator on $\mathbb{S}^1$ is given by
\begin{equation*}
	\dot{\zeta} = u(t)
	\underbrace{
	\begin{pmatrix}
	0 & -1 \\
	1 & 0
	\end{pmatrix}}_{\mathcal{J}} \zeta  ,
	\qquad
	\zeta \in \mathbb{S}^1,
\end{equation*}
with $u(t) \in \mathbb{R}$.
Any angle $\vartheta \in \mathbb{R}$ can be mapped into an element of the unit circle given by $(\cos(\vartheta) \; \sin(\vartheta))^T \in \mathbb{S}^1$. The identity element in $\mathbb{S}^1$ is $(1 \; 0)^T$. Finally, to any $\zeta = (c\; s)^T \in \mathbb{S}^1$ we can associate a rotation matrix $\mathcal{C}[\zeta] = {\tiny\begin{pmatrix}c & -s\\ s & c\end{pmatrix}}$, which is used for group multiplication: for any $\zeta_1, \zeta_2 \in \mathbb{S}^1$, the product is given by $\zeta_1 \cdot \zeta_2 = \mathcal{C}[\zeta_1]\zeta_2 = \mathcal{C}[\zeta_2]\zeta_1$.

\subsection{Hybrid Dynamical Systems}

In this paper we adopt the formalism of hybrid dynamical systems as in \cite{goebel2012hybrid}. In particular, a hybrid system $\mathcal{H}$ can be described as
\begin{equation}\nonumber
\mathcal{H} : \left\{
\begin{split}
&\dot{x} &\in F(x,u) & \qquad (x, u) \in C\\
&x^+ &\in G(x,u) & \qquad (x, u) \in D
\end{split}
\right.
\end{equation}
where $x$ is the state, $u$ is the input, $C$ is the flow set, $F$ is the flow map, $D$ is the jump set, and $G$ is the jump map. The state of the hybrid system can either flow according to the differential inclusion $\dot{x} \in F$ (while $(x, u) \in C$), or jump according to the difference inclusion ${x}^+ \in G$ (while $(x, u) \in D$). For all the concepts regarding hybrid solutions, stability, robustness, and related Lyapunov theory, we refer to \cite{goebel2012hybrid} and references therein.

\section{Model Formulation and Problem Statement}\label{sec:problem}

The electromagnetic model of a PMSM in a static bi-phase reference frame under balanced working conditions, linear magnetic circuits, and negligible iron losses, can be written as
\begin{equation}\label{eq:model1}
\frac{d}{dt}i_s = -\frac{R}{L}i_s + \frac{1}{L}u_s - \frac{\omega \varphi \mathcal{J} \zeta}{L}, \qquad \dot{\zeta} = \omega \mathcal{J} \zeta, 
\end{equation}
where $i_s, u_s \in \mathbb{R}^2$ are the stator currents and voltages, respectively, and in particular $u_s$ is a piecewise continuous signal defined on the interval $[t_0, \infty)$, with $t_0$ the initial time.
Furthermore, $\omega$ is the rotor electrical angular speed, while $\zeta \in \mathbb{S}^1$ and $\varphi \in \mathbb{R}_{>0}$ are the angular configuration and the (constant) amplitude of the rotor magnetic flux vector, respectively.
Finally, $R$ is the stator resistance and $L$ is the stator inductance.\\
In the field of sinusoidal machines, it is common practice to represent \eqref{eq:model1} in rotating reference frames. Consider a generic rotating reference frame with $\zeta_r \in \mathbb{S}^1$ and $\omega_r$ its angular orientation and speed, respectively. Then, \eqref{eq:model1} becomes
\begin{equation}\label{eq:model2}
\begin{split}
\frac{d}{dt}i_r &= -\frac{R}{L}i_r + \frac{1}{L}u_r - \frac{\omega \varphi \mathcal{J} \mathcal{C}^T[\zeta_r] \zeta}{L} - \omega_r \mathcal{J} i_r \\
\dot{\zeta} &= \omega \mathcal{J} \zeta, \qquad \dot{\zeta}_r = \omega_r \mathcal{J} \zeta_r,
\end{split}
\end{equation}
where $i_r = \mathcal{C}^T[\zeta_r]i_s$, $u_r = \mathcal{C}^T[\zeta_r]u_s$.\\
In this work, the angular speed $\omega$ is modeled as an unknown bounded input, and the following regularity assumption is required.
\begin{assumption}\label{assumption}\label{hyp:w_regularity}
The signal $\omega(\cdot)$ is defined over the interval $[t_0,\infty)$ and, in addition:
\begin{itemize}
	\item $\omega(\cdot)$ is $\mathcal{C}^0$ and piecewise $\mathcal{C}^1$ in its domain of existence;
	\item there exist positive scalars $\omega_{\min}$, $\omega_{\max}$ such that, for all $t \geq t_0$, it holds $\omega_{\min} \leq |\omega(t)| \leq \omega_{\max}$;
	\item $|D^+ \omega(t)|$ exists and is bounded, for all $t \geq t_0$.
\end{itemize}
\end{assumption}
Note that these conditions, combined with $u_s$ being piecewise continuous, ensure existence and uniqueness of solutions on $[t_0, \infty)$.
Additionally, since the properties that we specified for the input signals do not depend on $t_0$, we can choose in the following $t_0 = 0$ without loss of generality.
Assumption \ref{assumption} requires the angular speed $\omega$ to have constant sign and uniformly non-zero magnitude.
This condition is slightly more restrictive than the well-known assumption of non-permanent zero speed, which was proven to be a sufficient condition to reconstruct $\omega$, $\zeta$, and $\varphi$, assuming currents and voltages available for measurement and the parameters $R$ and $L$ perfectly known \cite{zaltni2010synchronous}. Nevertheless, Assumption \ref{assumption} is compatible with significant applications such as renewables electric energy generation and electric vehicles propulsion (UAVs, HEVs).\\
We finally recall the problem of \textit{sensorless observer, with (restricted) variable speed and no mechanical model} \cite{tilli2019towards}: given the PMSM dynamics \eqref{eq:model1} or \eqref{eq:model2}, design an estimator of $\zeta$, $\omega$, $\varphi$ with only stator voltages and currents available for measurement, such that appropriate stability and convergence properties hold under Assumption \ref{assumption}.

\section{The Proposed Hybrid Observer}\label{sec:main_result}

\begin{table}[t!]	
	\begin{center}
		\captionsetup{width=0.7\columnwidth}
		\caption{System and observers parameters}\label{tab:TabParMot}
		\begin{tabular}{lc | lc}\hline
			\hline
			{\scriptsize Stator resistance $R$ $[\Omega]$}&{\scriptsize $0.06$}&{\scriptsize $k_{\text{p}}$}&{\scriptsize $2.18 \times 10^4$}  \\
			{\scriptsize Stator inductance $L$ [$\mu \text{H}$]}&{\scriptsize $33.75$} &{\scriptsize $k_{\text{i}}$}&{\scriptsize $9.34 \times 10^3$}\\
			{\scriptsize Nominal angular speed [$\text{rpm}$]}&{\scriptsize $6000$} & {\scriptsize $k_{\eta}$}&{\scriptsize $95.7$} \\
			{\scriptsize Rotor magnetic flux $\varphi$ [$\text{mWb}$]}&{\scriptsize $1.9$}&{\scriptsize $\gamma$}&{\scriptsize $4582$}\\
			{\scriptsize Number of pole pairs $p$}&{\scriptsize $7$}&{\scriptsize $\Lambda$}&{\scriptsize $200$}\\
			{\scriptsize Load Inertia [$\text{Kgm}^2$]}&{\scriptsize $2.5 \times 10^{-5}$}&{\scriptsize $N$}&{\scriptsize $2$}\\
			\hline
			\end{tabular}
		\end{center}
		
		\vspace{-8pt}
		
\end{table}

In this section we present the main result of this work, and we compare it with a preliminary continuous-time solution.
To simplify the presentation and better highlight the connection between the different strategies, we choose to embed along the text some numerical results, based on a UAV propeller motor, whose parameters are indicated in Table \ref{tab:TabParMot}.
Since the observer transient performance is more evident when it is disconnected from the controller, we employ a standard sensorized field-oriented controller to generate the speed profile $\omega$, selected as a combination of constant and time-varying ``aggressive'' sequences.
We omit the complete closed-loop simulations for brevity.

\subsection{The $\chi$-Reference Frame and a Continuous-Time Solution}
Let $\chi \coloneqq |\omega|\varphi \in \mathbb{R}_{>0}$, $\xi \coloneqq (1/\varphi)\sgn(\omega)$.
This allows to consider, in order to replace the PMSM angular dynamics, the particular frame given by $\zeta_\chi \coloneqq \zeta \sgn(\xi) = \zeta \sgn(\omega)$, which yields a very simple reformulation of the model, for a generic rotating frame $\zeta_r$:
\begin{equation}\label{eq:chi_frame}
\begin{split}
\frac{d}{dt}i_r &= -\frac{R}{L}i_r + \frac{1}{L}u_r - \frac{\chi\mathcal{J}\mathcal{C}^T[\zeta_r]\zeta_\chi}{L} - \hat{\omega}_r\mathcal{J}i_r\\
\dot{\zeta}_\chi &= \chi\xi \mathcal{J}\zeta_\chi, \qquad \dot{\zeta}_r = \omega_r \mathcal{J} \zeta_r. 
\end{split}
\end{equation}
Note, in particular, that $\xi \in \mathbb{R}$ is an unknown parameter and $\chi$ satisfies the following properties, which follow as a direct consequence of Assumption \ref{assumption}:
\begin{itemize}
\item $\chi$ is $\mathcal{C}^0$ and piecewise $\mathcal{C}^1$;
\item $\chi_{\text{m}} \leq \chi \leq \chi_{\text{M}}$, for some positive scalars $\chi_{\text{m}}$, $\chi_{\text{M}}$;
\item $|D^+\chi| \leq M$, for some positive scalar $M$.
\end{itemize}
The main idea of the proposed observer is then to design an estimator of the frame $\zeta_\chi$ by using as representation a frame $\zeta_r = \hat{\zeta}_\chi$, whose dynamics needs to be designed appropriately so that the two references synchronize asymptotically.
This synchronization problem can be recast as the stabilization of the misalignment error $\eta \coloneqq \mathcal{C}^T[\hat{\zeta}_\chi]\zeta_\chi \in \mathbb{S}^1$.
For convenience, we use the subscript $(\cdot)_{\hat{\chi}}$ to indicate the electric variables in the frame $\hat{\zeta}_\chi$, leading to the following dynamics:
\begin{equation}
\begin{split}
\frac{d}{dt}i_{\hat{\chi}} &= -\frac{R}{L}i_{\hat{\chi}} + \frac{1}{L}u_{\hat{\chi}} - \frac{\chi\mathcal{J}\eta}{L} - \hat{\omega}_\chi\mathcal{J}i_{\hat{\chi}}\\
\dot{\zeta}_{\chi} &= \chi \xi \mathcal{J}\zeta_\chi, \qquad \dot{\hat{\zeta}}_\chi = \hat{\omega}_\chi \mathcal{J}\hat{\zeta}_\chi,
\end{split}
\end{equation}
with $\hat{\omega}_\chi$ the angular speed of the frame $\hat{\zeta}_\chi$.
In \cite{tilli2019towards}, the synchronization problem was addressed with a continuous-time observer of the form
\begin{equation} \label{eq:observer}
\begin{split}
\dot{\hat{\imath}} &= -\frac{R}{L}\hat{\imath} + \frac{1}{L}u_{\hat{\chi}} + \frac{\hat{h}}{L} - \left(|\hat{h}|\hat{\xi} + k_{\eta}\hat{h}_1\right)\mathcal{J}i_{\hat{\chi}} + k_{\text{p}}\tilde{\imath}\\
\dot{\hat{h}} &= k_{\text{i}}\tilde{\imath} \qquad \dot{\hat{\zeta}}_\chi = \left(|\hat{h}|\hat{\xi} + k_{\eta}\hat{h}_1\right)\mathcal{J}\hat{\zeta}_\chi \qquad \dot{\hat{\xi}} = \gamma \hat{h}_1
\end{split}
\end{equation}
where $\hat \imath \in \mathbb{R}^2$ is the reconstruction of $i_{\hat{\chi}}$, $\tilde{\imath} \coloneqq  i_{\hat{\chi}} - \hat{\imath}$ is the current estimation error, $\hat \xi \in \mathbb{R}$ is the estimation of $\xi$, $\hat{h}$ is the estimate of the back-EMF $h = -\chi\mathcal{J}\eta$, while the angular speed is assigned as $\hat{\omega}_\chi = |\hat{h}|\hat{\xi} + k_{\eta}\hat{h}_1$.
Finally, $k_{\text{p}}$, $k_{\text{i}}$, $k_\eta$ and $\gamma$ are positive scalars used for tuning.\\
With this structure, the outputs of the observer, providing the desired estimates, are given by $\hat{\omega} = |\hat{h}|\hat{\xi}$ (the term multiplying $\hat{h}_1$ is often omitted in practice to limit noise propagation), $\hat{\zeta} = \hat{\zeta}_\chi\sgn(\hat{\xi})$ and $\hat{\varphi} = \sat(1/{|\hat{\xi}|})$, where the bounds of the saturation are chosen according to the expected motor parameter ranges.
In particular, it was proven that as long as Assumption \ref{assumption} holds, it is possible to ensure regional practical asymptotic stability by proper selection of the gains.
This stems from the two time scales approach used to separate the dynamics into a fast subsystem, given by a high-gain observer for current and back-EMF estimation, and a slow subsystem, which can be interpreted as an adaptive attitude observer on $\mathbb{S}^1$.
From the state space of the slow subsystem derives the inherently regional and not semi-global result: the reduced order error dynamics, obtained by supposing perfect knowledge of $i_{\hat{\chi}}$ and $h$, can be written as follows on the cylinder $\mathbb{S}^1 \times \mathbb{R}$, with $\tilde{\xi} \coloneqq \xi - \hat{\xi}$ indicating the flux estimation error:
\begin{equation}\label{eq:reduced}
\dot{\eta} = \left( \chi\tilde{\xi} - k_{\eta} \chi \eta_2 \right)\mathcal{J}\eta \qquad \dot{\tilde{\xi}} = -\gamma\chi \eta_2.
\end{equation}
Indeed, it can be proven that the domain of attraction of the configuration $\bar{x}_{\text{s}} = ((1, 0), 0) \in \mathbb{S}^1\times\mathbb{R}$, corresponding to rotation alignment and correct flux estimation, does not include an unstable manifold of dimension $1$, which originates from the saddle equilibrium $\bar{x}_{\text{u}} = ((-1, 0), 0)$ as shown in red in Figure \ref{fig:phase_diagram}.
\begin{figure}[t]
	\centering
	\psfragscanon
	\psfrag{x} [B][B][0.7][0]{$\tilde{\vartheta}$}
	\psfrag{y} [B][B][0.7][0]{$\tilde{\xi}$}
	\psfrag{a} [B][B][0.7][0]{$\bar{x}_{\text{u}}$}
	\psfrag{b} [B][B][0.7][0]{$\bar{x}_{\text{s}}$}
	\psfrag{c} [B][B][0.7][0]{$\bar{x}_{\text{u}}$}
	\psfrag{-p} [B][B][0.7][0]{$-\pi$}
	\psfrag{p} [B][B][0.7][0]{$\pi$}
	\includegraphics[clip=true, height = 4cm]{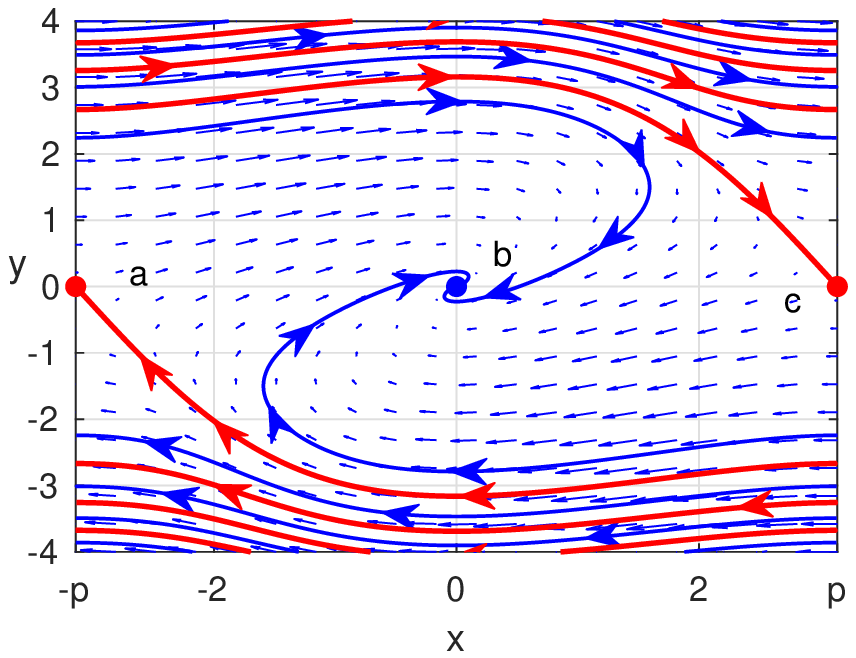}
	\psfrag{x} [B][B][0.7][0]{$\eta_1$}
	\psfrag{y} [B][B][0.7][0]{$\eta_2$}
	\psfrag{z} [B][B][0.7][0]{$\tilde{\xi}$}
	\includegraphics[clip=true, height=4cm]{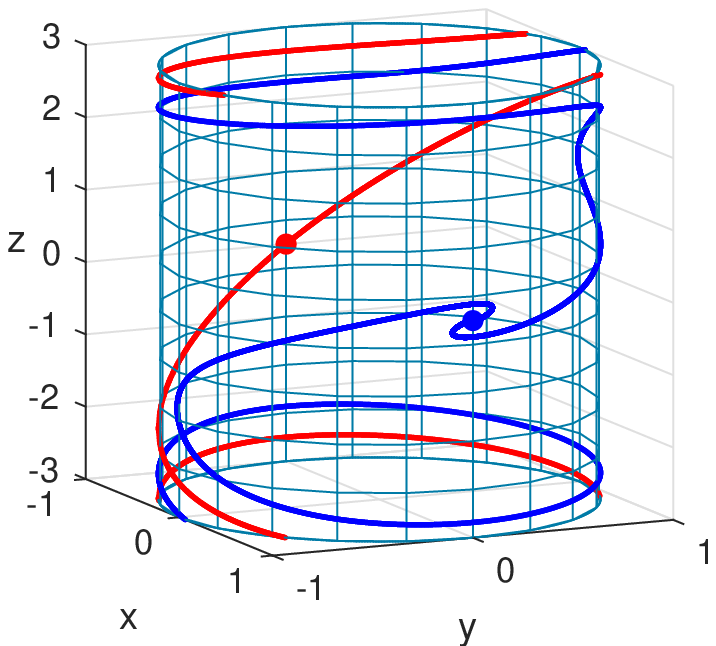}
	\caption{Phase diagram of \eqref{eq:reduced} for $\chi = 1$, $k_\eta = 1.5$, $\gamma = 1$, as shown in \cite{tilli2019towards}. On the left, the unstable manifold (red) and some trajectories converging to $\bar{x}_{\text{s}}$ (blue) are depicted on the equivalent planar representation $(\tilde{\vartheta}, \tilde{\xi})$, where $\tilde{\vartheta} = \atan(\eta_2, \eta_1)$ is the unique angle in the interval $[-\pi; \pi)$ corresponding to $\eta$. On the right, the same objects are represented on the cylinder $\mathbb{S}^1 \times \mathbb{R}$.}
	\label{fig:phase_diagram}
	
	\vspace{-8pt}
	
\end{figure}
Figure \ref{fig:sim_results} (plots (a),(d),(g),(j)) presents the simulation results corresponding to observer \eqref{eq:observer} with the same gains as in \cite{tilli2019towards} (see Table \ref{tab:TabParMot}).
Note that the initial transient (corresponding to high values of $\tilde{\xi}$) is relatively slow, highlighting the same helicoidal shape as in Figure \ref{fig:phase_diagram}.

\begin{figure}[t!]
 	\centering
	\begin{subfigure}[b]{0.23\textwidth}
	\centering
 	\psfrag{x}[B][B][0.7][0]{(a) \; time [s]}
 	\psfrag{y}[B][B][0.7][0]{$\omega/p$, $\hat{\omega}/p$ [$\text{rpm}$]}
 	\raisebox{-\height}{\includegraphics[clip = true, width = \textwidth]{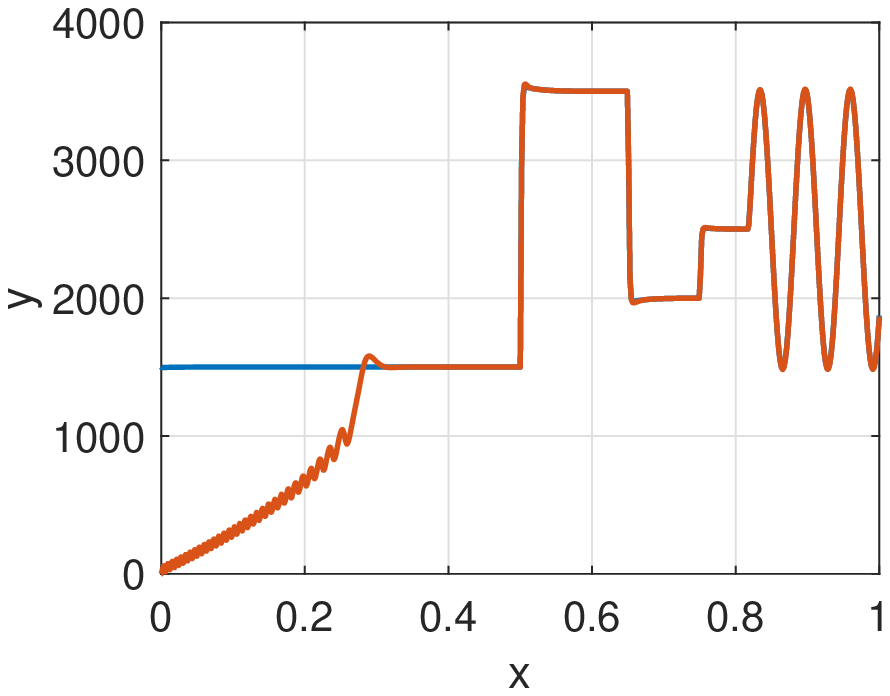}}
	
	\vspace{2pt}
	
	\centering
 	\psfrag{x}[B][B][0.7][0]{(b) \; time [s]}
 	\psfrag{y}[B][B][0.7][0]{$\omega/p$, $\hat{\omega}/p$ [$\text{rpm}$]}
 	\raisebox{-\height}{\includegraphics[clip = true, width = \textwidth]{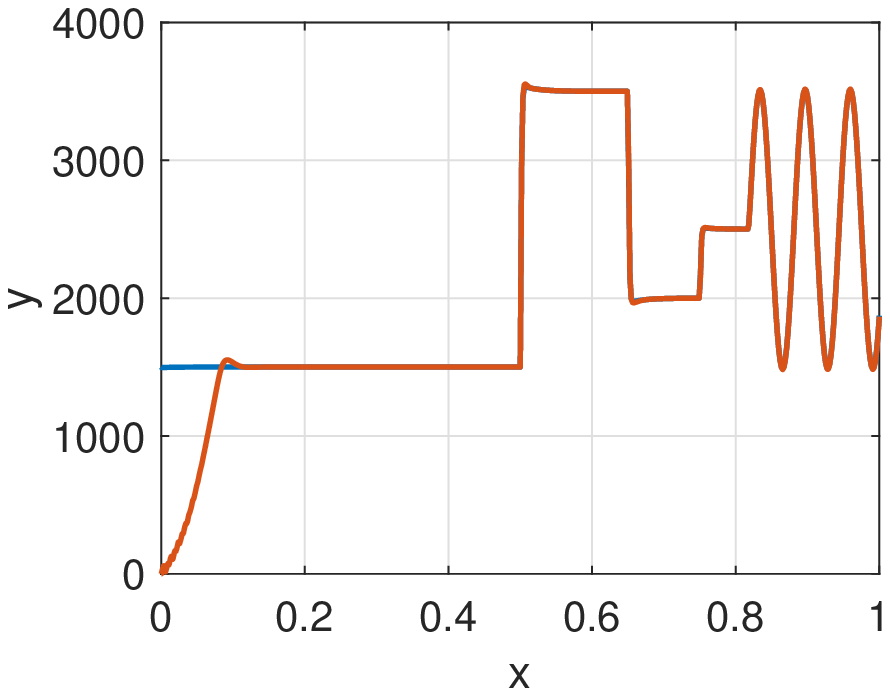}}
	
	\vspace{2pt}
	
 	\psfrag{x}[B][B][0.7][0]{(c) \; time [s]}
 	\psfrag{y}[B][B][0.7][0]{$\omega/p$, $\hat{\omega}/p$ [$\text{rpm}$]}
 	\raisebox{-\height}{\includegraphics[clip = true, width = \textwidth]{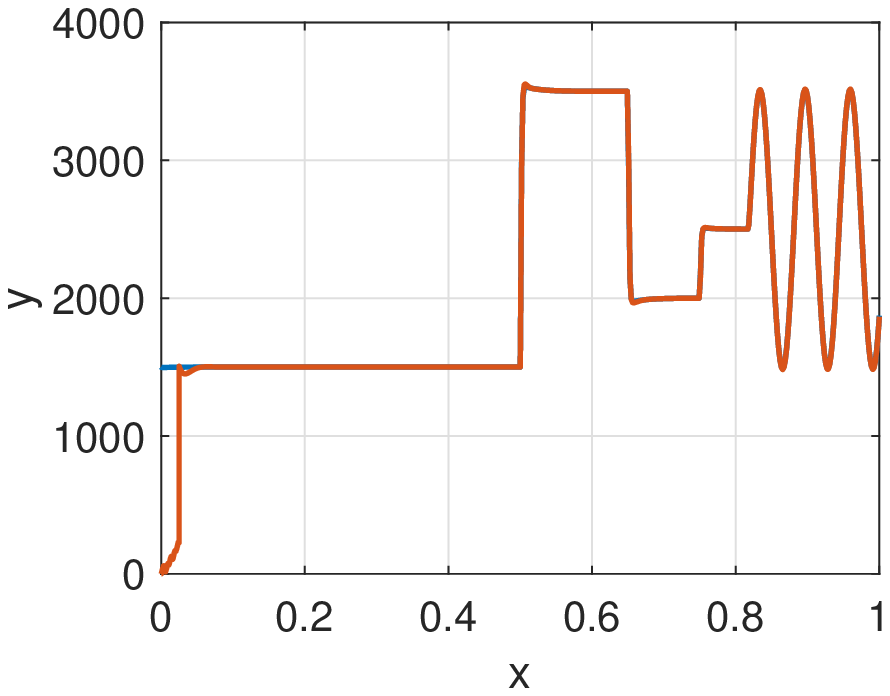}}
	\end{subfigure}
	\hspace{0.01\textwidth}
	\begin{subfigure}[b]{0.23\textwidth}	
	\centering
 	\psfrag{x}[B][B][0.7][0]{(d) \; time [s]}
 	\psfrag{y}[B][B][0.7][0]{$\tilde{\vartheta}$ [$\text{rad}$]}
 	\raisebox{-\height}{\includegraphics[clip = true, width = \textwidth]{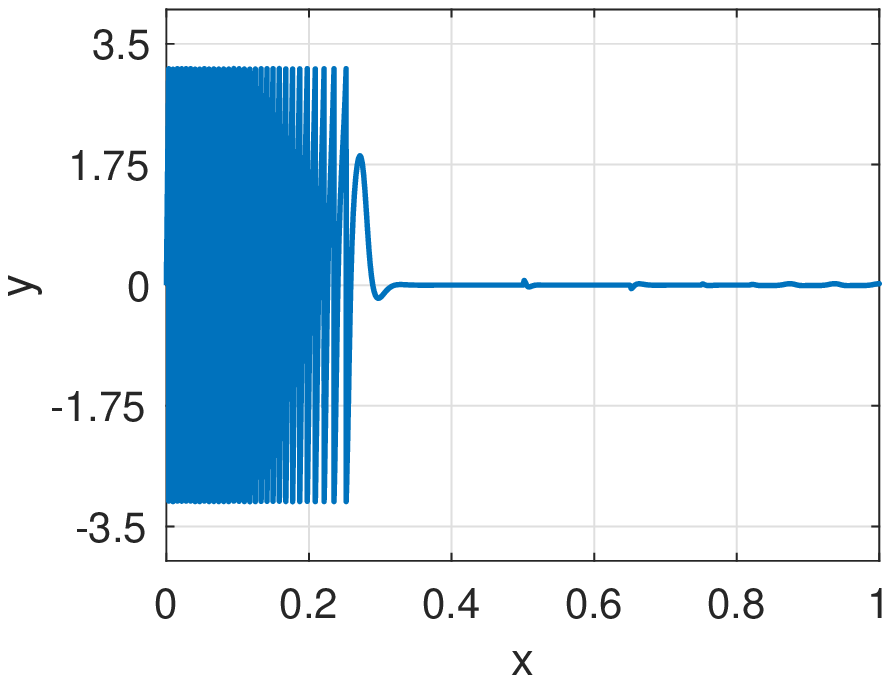}}
	
	\vspace{2pt}
		
	\centering
 	\psfrag{x}[B][B][0.7][0]{(e) \; time [s]}
 	\psfrag{y}[B][B][0.7][0]{$\tilde{\vartheta}$ [$\text{rad}$]}
 	\raisebox{-\height}{\includegraphics[clip = true, width = \textwidth]{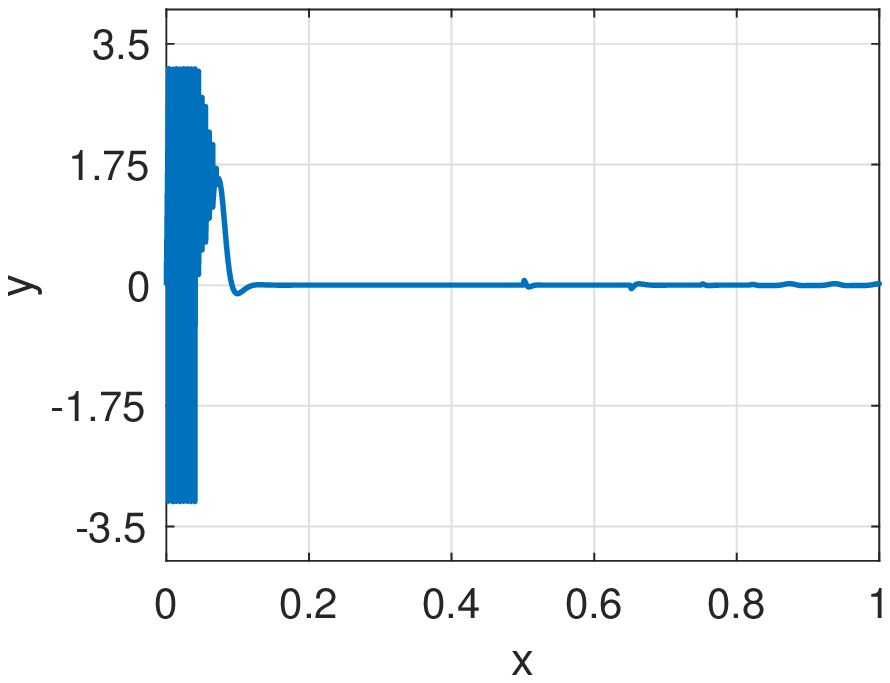}}
	
	\vspace{2pt}
 	
	\psfrag{x}[B][B][0.7][0]{(f) \; time [s]}
 	\psfrag{y}[B][B][0.7][0]{$\tilde{\vartheta}$ [$\text{rad}$]}
 	\raisebox{-\height}{\includegraphics[clip = true, width = \textwidth]{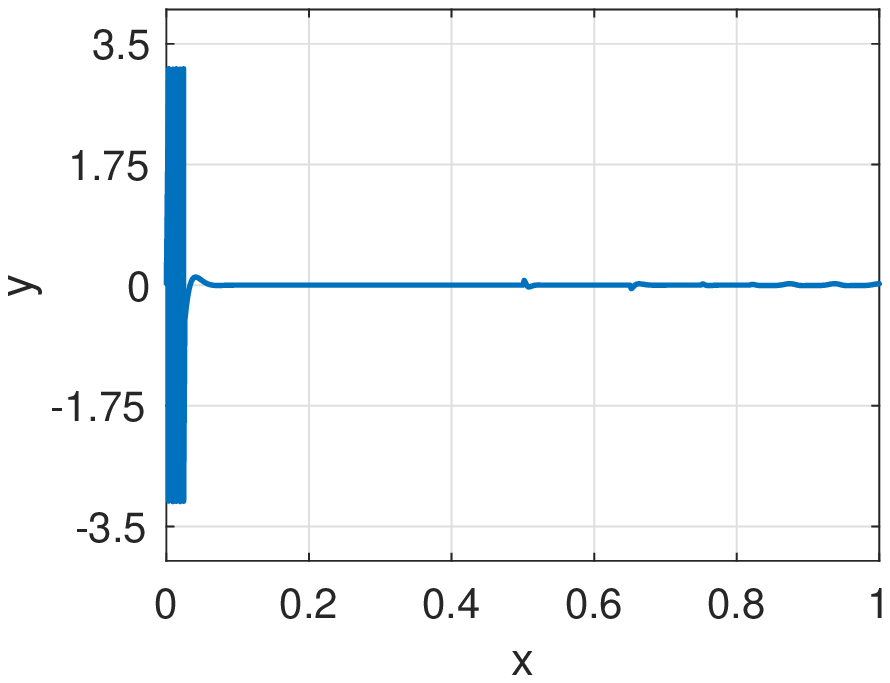}}
	\end{subfigure}
	\hspace{0.01\textwidth}
	\begin{subfigure}[b]{0.23\textwidth}
	\centering
 	\psfrag{x}[B][B][0.7][0]{(g) \; time [s]}
 	\psfrag{y}[B][B][0.7][0]{$\xi$, $\hat{\xi}$ [${\text{Wb}}^{-1}$]}
 	\raisebox{-\height}{\includegraphics[clip = true, width = \textwidth]{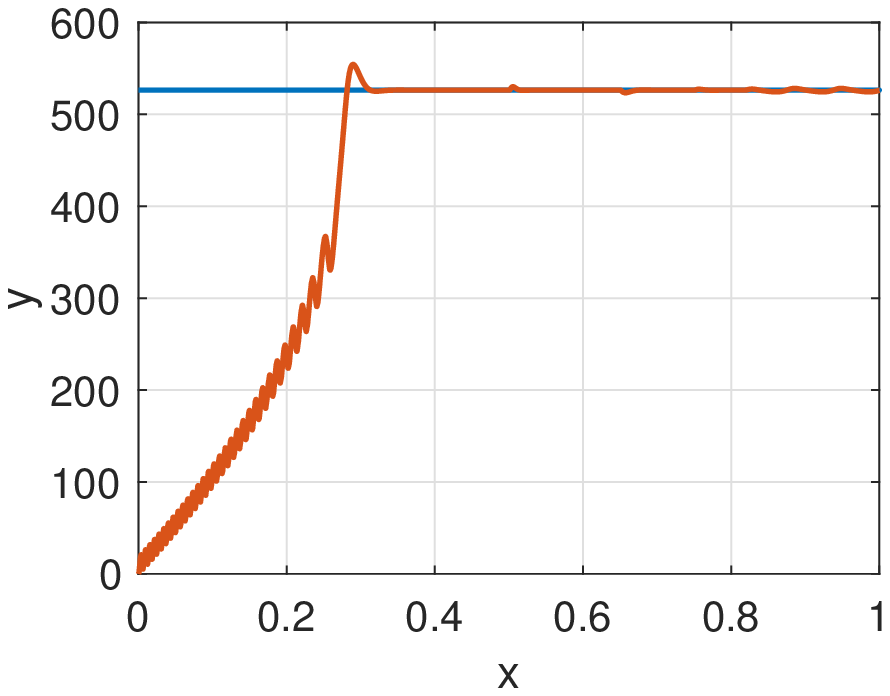}}
	
	\vspace{2pt}
	
	\centering
 	\psfrag{x}[B][B][0.7][0]{(h) \; time [s]}
 	\psfrag{y}[B][B][0.7][0]{$\xi$, $\hat{\xi}$ [${\text{Wb}}^{-1}$]}
 	\raisebox{-\height}{\includegraphics[clip = true, width = \textwidth]{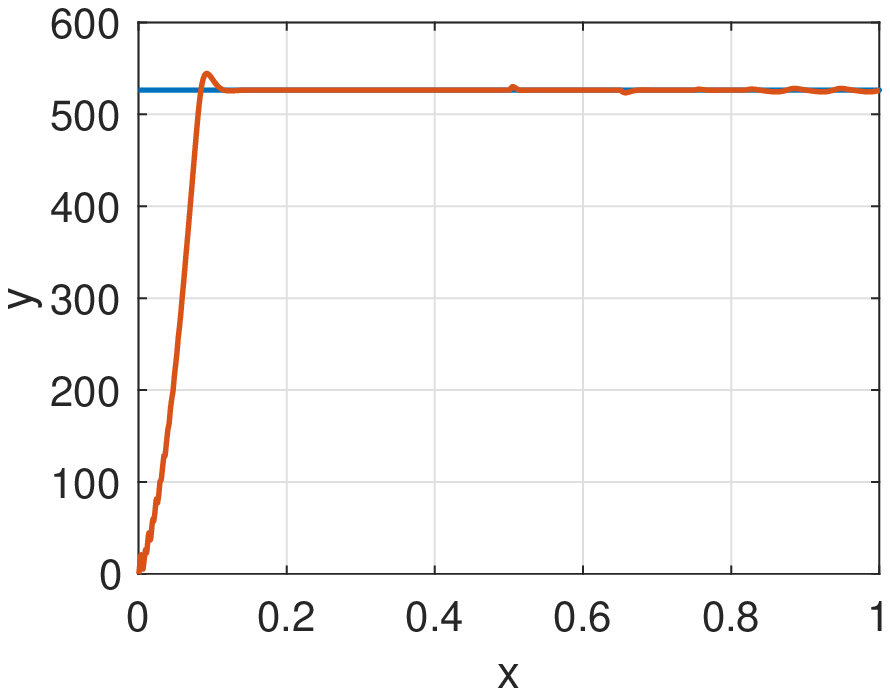}}
	
	\vspace{2pt}
	
	\psfrag{x}[B][B][0.7][0]{(i) \; time [s]}
 	\psfrag{y}[B][B][0.7][0]{$\xi$, $\hat{\xi}$ [${\text{Wb}}^{-1}$]}
 	\raisebox{-\height}{\includegraphics[clip = true, width = \textwidth]{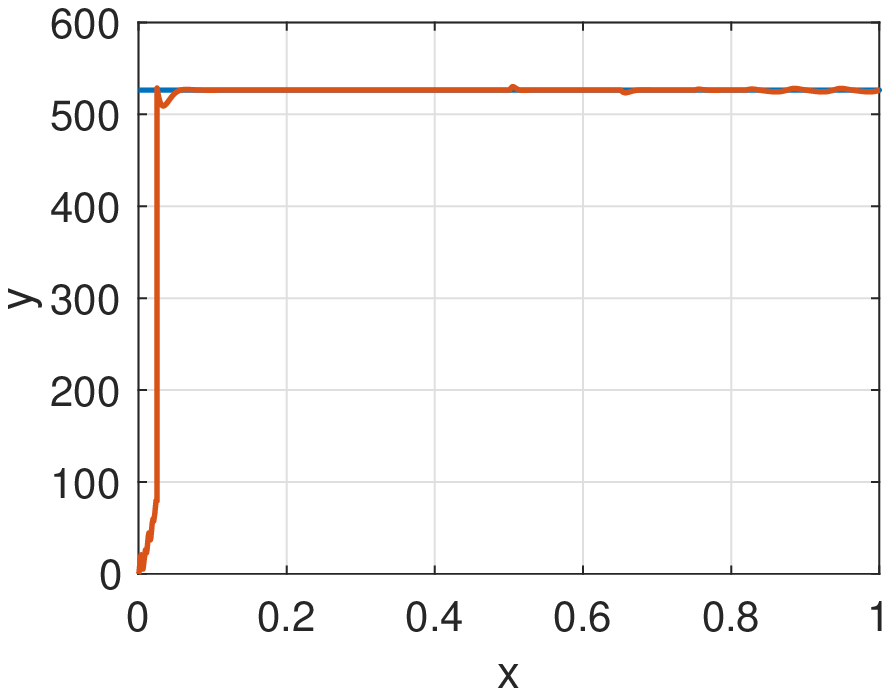}}
	\end{subfigure}
	\hspace{0.01\textwidth}
	\begin{subfigure}[b]{0.23\textwidth}
	\centering
 	\psfrag{x}[B][B][0.7][0]{(j) \; time [s]}
 	\psfrag{y}[B][B][0.7][0]{$\tilde{h}$ [$\text{V}$]}
 	\raisebox{-\height}{\includegraphics[clip = true, width = \textwidth]{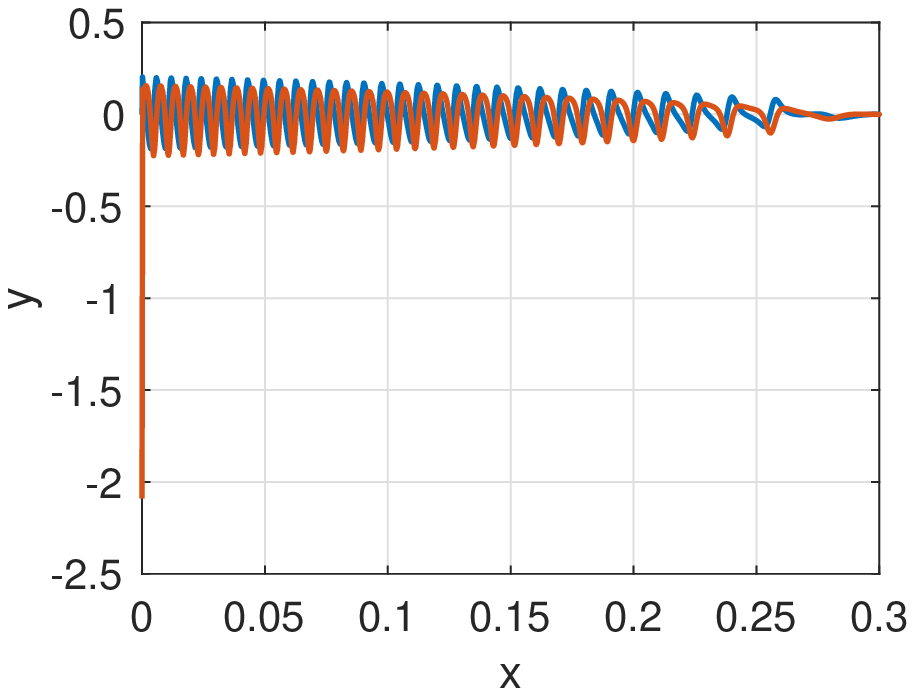}}
	
	\vspace{2pt}
	
	\centering
 	\psfrag{x}[B][B][0.7][0]{(k) \; time [s]}
 	\psfrag{y}[B][B][0.7][0]{$\tilde{h}$ [$\text{V}$]}
 	\raisebox{-\height}{\includegraphics[clip = true, width = \textwidth]{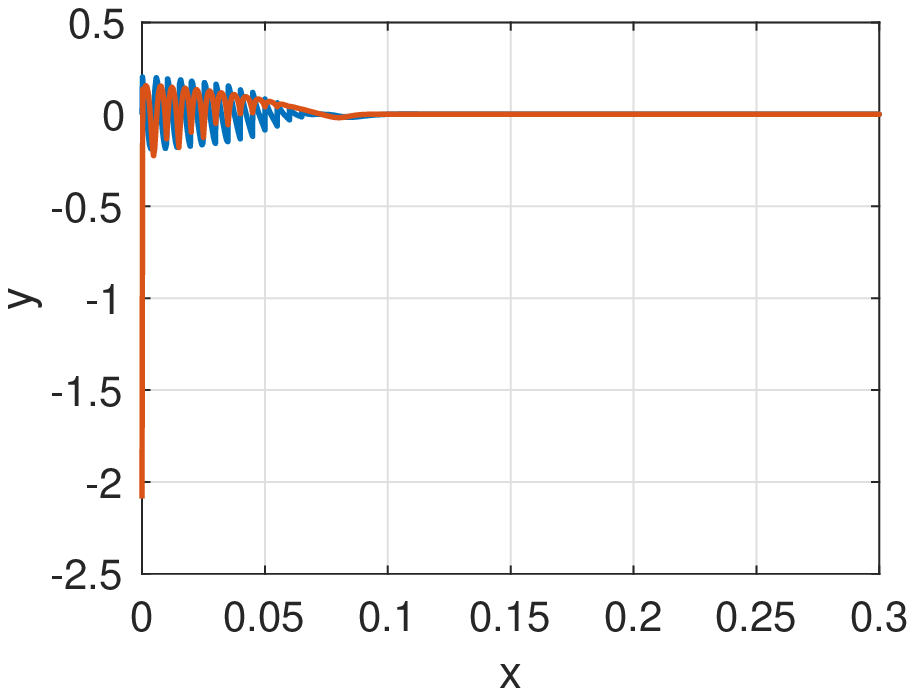}}
	
	\vspace{2pt}
	
	\psfrag{x}[B][B][0.7][0]{(l) \; time [s]}
 	\psfrag{y}[B][B][0.7][0]{$\tilde{h}$ [$\text{V}$]}
 	\raisebox{-\height}{\includegraphics[clip = true, width = \textwidth]{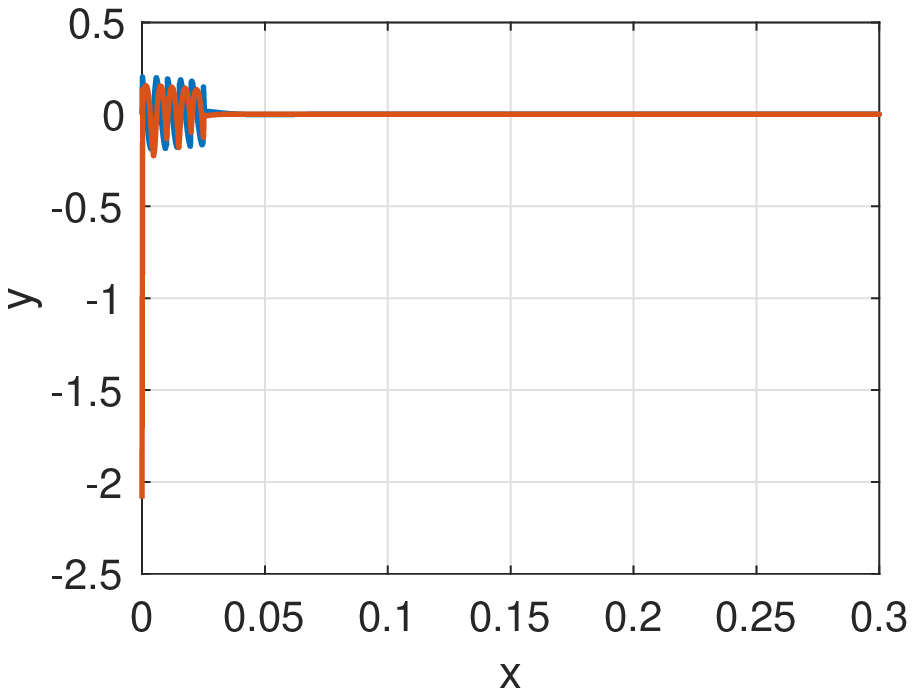}}
	\end{subfigure}
 	\caption{First row: observer \eqref{eq:observer}. Second row: observer \eqref{eq:hybrid_obs_v1}. Third row: observer \eqref{eq:hybrid_obs_v1}-\eqref{eq:batch}-\eqref{eq:batch_jump}. (a),(b),(c): Rotor angular speed (blue) and estimated value (red). (d),(e),(f): Rotor angular position reconstruction error. (g),(h),(i): Parameter $\xi$ (blue) and its estimate (red). (j),(k),(l): Back-EMF reconstruction error, with the first component in blue and the second one in red.}
 	\label{fig:sim_results}
	
 	\vspace{-8pt}
	
\end{figure} 

\subsection{A Hybrid Strategy for Semi-Global Stability}

Following the insights provided by the continuous-time solution, we opt to modify the reduced order system \eqref{eq:reduced} by enriching its dynamics with a jump policy, which corresponds to jumps of the estimates $\hat{\zeta}_\chi$, $\hat{\xi}$, while preserving the existent flows.
To simplify the approach and allow easy implementation of the observer, we propose to augment the observer dynamics with a clock, given by:
\begin{equation}\label{eq:timer}
\left\{
\begin{matrix}
\dot{\rho} = \Lambda & \qquad \rho \in [0, 1]\\
\rho^+ = 0 & \qquad \rho = 1
\end{matrix}
\right.
\end{equation}
with $\Lambda$ a positive scalar for tuning.
Clearly, the clock dynamics can be used to enforce jumps of the angular estimate at regular times and thus break the cylinder topological constraint, but it seems also convenient as a way to embed additional desirable features.
Among these, we will propose a simple identifier to enhance the observer transient performance.
Firstly, however, we introduce the baseline strategy with no identifier.
In place of \eqref{eq:reduced}, consider the hybrid system:
\begin{equation}\label{eq:reduced_hybrid}
\mathcal{H}_0: \left\{
\begin{matrix}
\begin{pmatrix}\dot{\eta}\\ \dot{\tilde{\xi}}\\ \dot{\rho} \end{pmatrix} = \begin{pmatrix}
\left( \chi\tilde{\xi} - k_{\eta} \chi \eta_2 \right)\mathcal{J}\eta\\
-\gamma \chi \eta_2\\
\Lambda
\end{pmatrix} \eqqcolon F_0(\eta, \tilde{\xi}, \rho, \chi) & \begin{pmatrix}\eta \\ \tilde{\xi}\\ \rho \end{pmatrix} \in C_{\text{s}}\\
\begin{pmatrix}{\eta}^+\\ {\tilde{\xi}}^+\\ {\rho}^+ \end{pmatrix} \in \begin{pmatrix}
\left\{
\begin{split}
-F\eta, & \quad \chi\eta_1 \leq 0\\
\eta, & \quad \chi\eta_1 \geq 0
\end{split}
\right.\\
\tilde{\xi}\\
0
\end{pmatrix} \eqqcolon G_0(\eta, \tilde{\xi}, \rho, \chi) & \begin{pmatrix}\eta \\ \tilde{\xi}\\ \rho \end{pmatrix} \in D_{\text{s}}\\
\end{matrix}
\right.
\end{equation}
where $F = \diagonal\{1, -1\}$, while $C_{\text{s}} = \mathbb{S}^1\times\mathbb{R}\times[0, 1]$ and $D_{\text{s}} = \mathbb{S}^1\times\mathbb{R}\times\{1\}$.
In this structure, the angle $\eta$ is always reset to a value satisfying $\eta_1 \geq 0$, thus ensuring that the set $\bar{x}_{\text{u}} \times [0, 1]$ is not an attractor compatible with the data of system \eqref{eq:reduced_hybrid}.
In fact, the next result confirms that the proposed hybrid strategy removes the unstable manifold $\mathcal{R}_{\mathcal{U}}$.
\begin{lemma}\label{lemma1}
The set $\mathcal{A}_0 \coloneqq \bar{x}_{\textup{s}} \times [0, 1] \subset \mathbb{S}^1 \times \mathbb{R}^2$ is a uniformly preasymptotically stable attractor for the hybrid system \eqref{eq:reduced_hybrid}, with basin of preattraction given by $\mathbb{S}^1\times\mathbb{R}^2$.
\end{lemma}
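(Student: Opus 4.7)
The plan is to exhibit a single Lyapunov function $V$ that is proper with respect to $\mathcal{A}_0$, nonincreasing along both flows and jumps of $\mathcal{H}_0$, and whose only weakly invariant level set reduces to $\mathcal{A}_0$. Specifically, I would take
\[
V(\eta,\tilde\xi,\rho) \coloneqq (1-\eta_1) + \frac{1}{2\gamma}\tilde\xi^2,
\]
which is continuous, nonnegative, vanishes exactly on $\mathcal{A}_0$, and — being radially unbounded in $\tilde\xi$ while $\mathbb{S}^1$ and $[0,1]$ are compact — is proper with respect to $\mathcal{A}_0$ on $C_{\text{s}}\cup D_{\text{s}}$.

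The monotonicity of $V$ reduces to two straightforward computations. Along flows, $\mathcal{J}\eta=(-\eta_2,\eta_1)^T$ gives $\dot\eta_1=-(\chi\tilde\xi-k_\eta\chi\eta_2)\eta_2$, and substituting $\dot{\tilde\xi}=-\gamma\chi\eta_2$ the indefinite cross term $\chi\tilde\xi\eta_2$ cancels, leaving the clean bound
\[
\dot V = -k_\eta\chi\eta_2^2 \leq -k_\eta\chi_{\text{m}}\eta_2^2 \leq 0 \qquad \text{on } C_{\text{s}}.
\]
Across jumps, since $\chi>0$ the nontrivial branch $\chi\eta_1\leq 0$ coincides with $\eta_1\leq 0$; the reset $\eta^+=-F\eta$ only flips $\eta_1\mapsto -\eta_1$ and leaves $\tilde\xi$ unchanged, hence $V^+-V=2\eta_1\leq 0$, with strict inequality whenever $\eta_1<0$ (and equal to $-2$ at the saddle $\bar x_{\text{u}}$). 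Together with properness, these two facts already deliver uniform Lyapunov stability of $\mathcal{A}_0$ and forward completeness with precompact trajectories.

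For attractivity, I would apply an invariance principle. On the largest weakly invariant set where $V$ is constant, $\dot V \equiv 0$ forces $\eta_2\equiv 0$ and therefore $\eta_1=\pm 1$; then $\dot\eta_2=\chi\tilde\xi\eta_1\equiv 0$ together with $\chi\geq\chi_{\text{m}}>0$ and $\eta_1=\pm 1$ forces $\tilde\xi\equiv 0$, leaving only the two equilibria $\bar x_{\text{s}}$ and $\bar x_{\text{u}}$ as candidates. The candidate $\bar x_{\text{u}}$ is removed by the jump dynamics: at $(-1,0)$ with $\rho=1$, the reset lands at $(1,0)$ with $\Delta V=-2$, contradicting the constancy of $V$. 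Hence the only weakly invariant set on which $V$ is constant is $\mathcal{A}_0$, so every maximal solution converges to it.

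The main obstacle is that $\mathcal{H}_0$ is not autonomous, since $\chi$ is an exogenous bounded signal and the classical hybrid LaSalle theorem of \cite{goebel2012hybrid} is stated in the autonomous setting. A clean remedy is to enlarge the state by $\chi$, viewed as a state constrained by $\chi\in[\chi_{\text{m}},\chi_{\text{M}}]$ with $\dot\chi\in[-M,M]$ (consistent with Assumption \ref{assumption}); the enlarged system is autonomous with compact reachable sets, satisfies the hybrid basic conditions of \cite{goebel2012hybrid}, and the invariance principle then rigorously supplies the limiting argument sketched above. Uniformity with respect to admissible $\chi$ and initial condition delivers uniform preasymptotic stability, and since no restriction on the initial $(\eta,\tilde\xi,\rho)$ was used, the basin of preattraction is the whole state space, as claimed.
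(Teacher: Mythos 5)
Your argument is correct, but it follows a genuinely different route from the paper. Your function $V$ is exactly the paper's first Matrosov function $W_1$, and both proofs obtain uniform global stability identically from $\dot V=-k_\eta\chi\eta_2^2\le 0$ and $V^+-V=2\min\{\eta_1,0\}\le 0$. For attractivity, however, the paper does \emph{not} use an invariance principle: it applies the nested Matrosov theorem for hybrid systems \cite[Theorem 4.1]{sanfelice2009asymptotic} with three further auxiliary functions $W_2,W_3,W_4$, where the clock-dependent terms $\exp(\pm\rho)$ explicitly encode the mechanism ``a jump must occur within $1/\Lambda$ of flow time and strictly decreases $1-\eta_1$ near $\bar x_{\text{u}}$''; this yields \emph{uniform} global preasymptotic stability directly for the time-varying data, with no autonomization. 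You instead embed $\chi$ as a state governed by $\dot\chi\in[-M,M]$, $\chi\in[\chi_{\text{m}},\chi_{\text{M}}]$, and invoke the hybrid invariance principle. That route is legitimate, and your identification of the limit set is right (in the hybrid statement the relevant set is the largest weakly invariant subset of $V^{-1}(r)\cap\bigl[\overline{u_C^{-1}(0)}\cup\bigl(u_D^{-1}(0)\cap G(u_D^{-1}(0))\bigr)\bigr]$; the second member of the union is empty here precisely because every jump resets $\rho$ to $0$ while $u_D^{-1}(0)\subset\{\rho=1\}$, which is why your continuous-time-style phrasing still lands on $\{\eta_2=0\}$ and then on $\mathcal{A}_0$ after discarding $\bar x_{\text{u}}$ via the mandatory $\Delta V=-2$ jump). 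The one step you assert rather than prove is the upgrade from ``every precompact solution converges'' plus Lyapunov stability to \emph{uniform} preasymptotic stability on the whole basin; this does hold, but only by appealing to well-posedness of the augmented system and the equivalence results for compact attractors of well-posed hybrid systems in \cite{goebel2012hybrid}, restricted to the compact, forward-invariant sublevel sets of $V$. In short: your approach is more economical (one function, one cancellation) but outsources uniformity to general well-posedness theory, whereas the paper's Matrosov construction is longer but self-contained, quantitative, and native to the time-varying setting.
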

\begin{proof}
It is a direct application of the Nested Matrosov Theorem for hybrid systems \cite[Theorem 4.1]{sanfelice2009asymptotic}.
Indeed, consider the following Matrosov functions (which are continuous in their arguments, and thus bounded in any compact set of the states $(\eta, \tilde{\xi}, \rho)$, by Assumption \ref{hyp:w_regularity}):
\begin{equation}\label{eq:matrosov_fn}
\begin{split}
W_1(\eta, \tilde{\xi}, \rho, \chi) &= 1 - \eta_1 + \frac{1}{2\gamma}\tilde{\xi}^2\\
W_2(\eta, \tilde{\xi}, \rho, \chi) &= -\chi \tilde{\xi} \eta_1 \eta_2\\
W_3(\eta, \tilde{\xi}, \rho, \chi) &= \exp(\rho)\left[\eta_2^2 + \tilde{\xi}^2\right]\\
W_4(\eta, \tilde{\xi}, \rho, \chi) &= \exp(-\rho)\left[1 - \eta_1\right].
\end{split}
\end{equation}
Employing routine calculations and by means of by Assumption \ref{hyp:w_regularity}, it is possible to establish the bounds $\sup_{f \in F_0(\eta, \tilde{\xi}, \rho, \chi)}\langle \nabla W_i(\eta, \tilde{\xi}, \rho, \chi), (f, D^+\chi) \rangle \leq B_{\text{c}, i}(\eta, \tilde{\xi}, \rho)$, $i \in \{1, 2, 3, 4\}$, for all $(\eta, \tilde{\xi}, \rho) \in C_{\text{s}}$:
\begin{equation}
\begin{split}
B_{\text{c}, 1} &= -k_{\eta}\chi_{\text{m}}\eta_2^2 \leq 0\\
B_{\text{c}, 2} &= -\chi_{\text{m}}^2\eta_1^2\tilde{\xi}^2 + \Delta_2(M, \chi_{\text{M}}, \tilde{\xi}, \eta)|\eta_2|\\
B_{\text{c}, 3} &=  \Lambda W_2 + \exp(\rho)\Delta_3(\chi_{\text{M}}, \tilde{\xi}, \eta)|\eta_2|\\
B_{\text{c}, 4} &= -\Lambda \exp(-\rho)(1 - \eta_1) + \Delta_4(\chi_{\text{M}}, \tilde{\xi}, \eta)|\eta_2|,
\end{split}
\end{equation}
with $\Delta_2$, $\Delta_3$, $\Delta_4$ positive continuous functions in their arguments.
Note that $B_{\text{c}, 2} \leq -\chi_{\text{m}}^2\tilde{\xi}^2$ as $\eta_2 = 0$, thus in $B_{\text{c}, 3}$ and $B_{\text{c}, 4}$ the conditions 1)-2) of \cite[Theorem 4.1]{sanfelice2009asymptotic} must be checked in particular for $\eta_1 = -1$, $\eta_2 = 0$, $\tilde{\xi} = 0$, for any $\rho \in [0, 1]$.
Similarly, it holds $\sup_{g \in G_0(\eta, \tilde{\xi}, \rho, \chi)} W_i(g) - W_i(\eta, \tilde{\xi}, \rho, \chi) \leq B_{\text{d}, i}(\eta, \tilde{\xi}, \rho)$, $i \in \{1, 2, 3, 4\}$, with the following bounds, for all $(\eta, \tilde{\xi}, \rho) \in D_{\text{s}}$:
\begin{equation}
\begin{split}
B_{\text{d}, 1} &= \min \{0, 2\eta_1\} \leq 0\\
B_{\text{d}, 2} &= \max\{0, -2\chi_{\text{M}} |\tilde{\xi}| |\eta_2|\eta_1 \}, \qquad\qquad (B_{\text{d}, 2} > 0 \Rightarrow \eta_1 < 0)\\
B_{\text{d}, 3} &=  [\exp(0) - \exp(1)](\eta_2^2 + \tilde{\xi}^2) \leq 0\\
B_{\text{d}, 4} &= [\exp(0) - \exp(-1)] (1 - |\eta_1|).
\end{split}
\end{equation}
It can be easily verified from the first three bounds that the conditions 1)-2) of \cite[Theorem 4.1]{sanfelice2009asymptotic} are satisfied for all $(\eta, \tilde{\xi}, \rho) \in D_{\text{s}} \backslash \mathcal{A}_0$.
Finally, note that uniform global stability is easily established with $B_{\text{c}, 1}$, $B_{\text{d}, 1}$, in connection with the fact that $W_1$ is positive definite (considering a proper indicator function) with respect to the attractor $\mathcal{A}_0$, for all $(\eta, \tilde{\xi}, \rho) \in C_{\text{s}} \cup D_{\text{s}} \cup G_0(D_{\text{s}})$.
Since all sufficient conditions in \cite{sanfelice2009asymptotic} are verified, the statement follows immediately.
\end{proof}
In order to implement the hybrid observer leading to the above reduced order system, we need to compute the jumps of $\hat{\zeta}_\chi$ corresponding to $\eta^+ = -F\eta$, using $\hat{h}$ as a proxy of $h = -\chi\mathcal{J}\eta$.
For, note that
\begin{equation}
-F\eta = \mathcal{C}^T[\hat{\zeta}_\chi^+]\zeta_\chi = \mathcal{C}[\zeta_\chi]F\hat{\zeta}_\chi^+, 
\end{equation}
therefore it is possible to express $\hat{\zeta}_\chi^+$ as:
\begin{equation}
\hat{\zeta}_\chi^+ = -F\mathcal{C}^T[\zeta_\chi]F\eta = -\mathcal{C}[\zeta_\chi]\eta = -\mathcal{C}^T[\hat{\zeta}_\chi][\mathcal{C}[\zeta_\chi]\zeta_\chi].
\end{equation}
Furthermore, at each time a ``fast'' estimate of the rotor position (rescaled by $\chi > 0$) can be retrieved from $\hat{h}$ and $\hat{\zeta}_{\chi}$, since $\mathcal{J}h = \chi \eta$, and therefore $\chi \zeta_{\chi} = \mathcal{C}[\hat{\zeta}_{\chi}]\mathcal{J}h$.
These considerations finally yield the complete jump map $G_\zeta:\mathbb{R}^2\times\mathbb{S}^1 \rightrightarrows \mathbb{S}^1$:
\begin{equation}
\begin{split}
&G_\zeta(\hat{h}, \hat{\zeta}_\chi) \in \left\{\begin{matrix}
-\mathcal{C}^T[\hat{\zeta}_\chi]\begin{pmatrix}\cos(2\theta_\chi (\hat{h}, \hat{\zeta}_\chi))\\ \sin(2\theta_\chi (\hat{h}, \hat{\zeta}_\chi)) \end{pmatrix} &\quad \hat{h}_2 \geq 0\\
\hat{\zeta}_\chi &\quad \text{otherwise}
\end{matrix}
\right.\\
&\theta_\chi = \atan(y_\chi, x_\chi) \subset [-\pi, \pi],\qquad \begin{pmatrix} x_\chi \\ y_\chi \end{pmatrix} = \mathcal{C}[\hat{\zeta}_\chi]\mathcal{J}\hat{h}
\end{split}
\end{equation}
where in particular we let $\atan(0, 0) = [-\pi, \pi]$ and $\atan(y, x) = \{-\pi, \pi\}$, for all $(x, y)$ in the set $S = \{(x, y) \in \mathbb{R}^2: x < 0, y = 0\}$.
For convenience, let the map $G_{\text{f}}(\hat{h}, \hat{\zeta}_\chi) = \mathcal{C}^T[G_\zeta(\hat{h}, \hat{\zeta}_\chi)]\mathcal{C}[\hat{\zeta}_\chi]$ indicate the change of coordinates from the $\hat{\zeta}_\chi$-frame to the $\hat{\zeta}_\chi^+$-frame.
This map, which is available for observer design, is fundamental to describe the jumps that occur to both $i_{\hat{\chi}}$ and $h$, indeed:
\begin{equation}
i_{\hat{\chi}}^+ = \mathcal{C}^T[\hat{\zeta}_\chi^+]i_s = \mathcal{C}^T[G_\zeta(\hat{h}, \hat{\zeta}_\chi)]\mathcal{C}[\hat{\zeta}_\chi]\mathcal{C}^T[\hat{\zeta}_\chi]i_s = G_{\text{f}} i_{\hat{\chi}}, \qquad h^+ = -\chi\mathcal{J}\mathcal{C}^T[\hat{\zeta}_\chi^+]\zeta_\chi =  G_{\text{f}}h
\end{equation}
It follows that the overall observer structure is given by:
\begin{equation}\label{eq:hybrid_obs_v1}
\begin{matrix}
\begin{pmatrix}\dot{\hat{\imath}} \\ \dot{\hat{h}}\\ \dot{\hat{\zeta}}_\chi \\ \dot{\hat{\xi}} \\ \dot{\rho} \end{pmatrix} = \begin{pmatrix}
-\frac{R}{L}\hat{\imath} + \frac{1}{L}u_{\hat{\chi}} + \frac{\hat{h}}{L} - \hat{\omega}_\chi\mathcal{J}i_{\hat{\chi}} + k_{\text{p}}\tilde{\imath}\\
k_{\text{i}}\tilde{\imath}\\
\hat{\omega}_\chi\mathcal{J}\hat{\zeta}_\chi\\
\gamma \hat{h}_1\\
\Lambda
\end{pmatrix} & \rho \in [0, 1]\\
\begin{pmatrix}{\hat{\imath}}^+ \\ {\hat{h}}^+\\ {\hat{\zeta}}_\chi^+ \\ {\hat{\xi}}^+ \\ {\rho}^+ \end{pmatrix} \in \begin{pmatrix}
G_{\text{f}}(\hat{h}, \hat{\zeta}_\chi)\hat{\imath}\\
G_{\text{f}}(\hat{h}, \hat{\zeta}_\chi)\hat{h}\\
G_\zeta(\hat{h}, \hat{\zeta}_\chi)\\
\hat{\xi}\\
0
\end{pmatrix} & \rho = 1
\end{matrix}
\end{equation}
with $\hat{\omega}_\chi = |\hat{h}|\hat{\xi} + k_{\eta}\hat{h}_1$ as before.
Let $x_{\text{s}} \coloneqq (\eta, \tilde{\xi}, \rho) \in \mathbb{S}^1\times\mathbb{R}\times[0, 1]$ and $x_{\text{f}}\ \coloneqq T(\tilde{\imath}, \tilde{h}) \in \mathbb{R}^4$ with $\tilde{h} \coloneqq h - \hat{h}$ and $T$ a change of coordinates matrix such that \cite{tilli2019towards}:
\begin{equation}
x_{\text{f}} = T\begin{pmatrix}\tilde{\imath}\\ \tilde{h} \end{pmatrix} = \begin{pmatrix}
\varepsilon^{-1}I_2 & 0_{2\times 2}\\
-\varepsilon^{-1}I_2 & L^{-1}I_2
\end{pmatrix}\begin{pmatrix}\tilde{\imath}\\ \tilde{h} \end{pmatrix},
\end{equation}
with $\varepsilon$ a positive scalar such that $R/L + k_{\text{p}} = 2\varepsilon^{-1}$, $k_{\text{i}} = 2L\varepsilon^{-2}$.
We can then define the overall error dynamics as follows:
\begin{equation}\label{eq:error_system_hybrid}
\begin{matrix}
\begin{pmatrix}D^+x_{\text{f}}\\ \dot{x}_{\text{s}} \end{pmatrix} = \begin{pmatrix}
\varepsilon^{-1}\underbrace{\begin{pmatrix}-I_2 & I_2\\ -I_2& -I_2 \end{pmatrix}}_{A_\text{f}} x_{\text{f}} + \underbrace{\begin{pmatrix}0_{2\times2} \\ L^{-1}I_2 \end{pmatrix}}_{B_{\text{f}}} f_h\\
F_{\text{s}}(x_{\text{f}}, \chi, x_{\text{s}})
\end{pmatrix} & x_{\text{s}} \in C_{\text{s}}\\
\begin{pmatrix}x_{\text{f}}+\\ x_{\text{s}}^+ \end{pmatrix} \in \begin{pmatrix}
\diagonal\{G_{\text{f}}, G_{\text{f}}\}x_{\text{f}}\\
G_{\text{s}}(x_{\text{f}}, \chi, x_{\text{s}})
\end{pmatrix} & x_{\text{s}} \in D_{\text{s}}
\end{matrix}
\end{equation}
with $f_h = D^+h$ defined exactly as in \cite{tilli2019towards}, and $F_{\text{s}}$, $G_{\text{s}}$ the flows and jumps of the attitude estimation error (which correspond to the data in \eqref{eq:reduced_hybrid} if $\tilde{h} = 0$), respectively.
Note that it holds $A_{\text{f}} + A_{\text{f}}^T = -2I_4$, while the jump $x_{\text{f}}^+$ preserves the norm, indeed:
\begin{equation}
\begin{split}
|x_{\text{f}}^+|^2 & = |\varepsilon^{-1}\tilde{\imath}^+|^2 + |L^{-1}\tilde{h}^+ -\varepsilon^{-1}\tilde{\imath}^+|^2\\
& =  |G_{\text{f}} \varepsilon^{-1}\tilde{\imath}|^2 + |G_{\text{f}} (L^{-1}\tilde{h} -\varepsilon^{-1}\tilde{\imath})|^2\\
& = |\varepsilon^{-1}\tilde{\imath}|^2 + |L^{-1}\tilde{h} -\varepsilon^{-1}\tilde{\imath}|^2 = |x_{\text{f}}|^2.
\end{split}
\end{equation}
This means that on the one hand, during flows, the $x_{\text{f}}$-subsystem can be made arbitrarily fast by choosing $\varepsilon$ sufficiently small, while on the other hand the jumps do not cause any increase of $|x_{\text{f}}|$, and thus they do not represent an obstacle to time scale separation.
We summarize the stability properties of the above hybrid system with the following theorem, which represents the main result of this work.
\begin{theorem}\label{main_thm}
Consider system \eqref{eq:error_system_hybrid} with inputs $\chi(\cdot)$, $D^+\chi(\cdot)$, satisfying Assumption \ref{assumption}, and denote its solutions with $(\psi_{\textup{f}}(\cdot), \psi_{\textup{s}}(\cdot))$, with initial conditions $(x_{{\textup{f}}, 0}, x_{{\textup{s}}, 0})$.
In particular, denote with $\rho_0$ the initial condition of the clock.
Then, the attractor $0_{4\times1}\times\mathcal{A}_0$ is semiglobally practically asymptotically stable as $\varepsilon \to 0^+$, that is:
\begin{itemize}
\item there exists a proper indicator function $\sigma_{\textup{s}}$ of $\mathcal{A}_0$ in $\mathbb{S}^1\times \mathbb{R}^2$;
\item there exists a class $\mathcal{KL}$ function $\beta_{\textup{s}}$;
\end{itemize}
such that, for any positive scalars $\Delta_{\textup{f}}$, $\Delta_{\textup{s}}$, $\delta$, there exists a scalar $\varepsilon^* > 0$ such that, for all $0 < \varepsilon \leq \varepsilon^*$, all $(\psi_{\textup{f}}(\cdot), \psi_{\textup{s}}(\cdot))$ satifying $\rho_0 = 0$, $|x_{{\textup{f}}, 0}| \leq \Delta_{\textup{f}}$ and $\sigma_{\textup{s}}(x_{{\textup{s}}, 0}) \leq \Delta_{\textup{s}}$, the following bounds hold, for all $(t, j) \in \dom(\psi_{\textup{f}}(\cdot), \psi_{\textup{s}}(\cdot))$:
\begin{equation}
\begin{split}
|\psi_{\textup{f}}(t, j)| &\leq \exp\left(-t/\varepsilon \right)|x_{{\textup{f}}, 0}| + \delta\\
\sigma_{\textup{s}}(\psi_{\textup{s}}(t, j)) &\leq \beta_{\textup{s}}(\sigma_{\textup{s}}(x_{{\textup{s}}, 0}), t + j) + \delta.
\end{split}
\end{equation}
\end{theorem}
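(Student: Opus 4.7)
The plan is to exploit the two-time-scale structure of \eqref{eq:error_system_hybrid} and invoke a singular-perturbation style argument for hybrid systems, in the spirit of \cite{teel2003unified, sanfelice2011singular}. The proof splits into three logical steps: establish a fast-subsystem bound; interpret the slow subsystem with $x_{\textup{f}} = 0$ as the reduced hybrid system $\mathcal{H}_0$ of Lemma \ref{lemma1}; and conclude through robustness of (pre-)asymptotic stability under vanishing perturbations.

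For the fast subsystem, I would use the Lyapunov candidate $V_{\textup{f}}(x_{\textup{f}}) = |x_{\textup{f}}|^2$. During flows, $A_{\textup{f}} + A_{\textup{f}}^T = -2I_4$ immediately yields $\dot V_{\textup{f}} \leq -2\varepsilon^{-1}V_{\textup{f}} + 2|x_{\textup{f}}|\,|B_{\textup{f}} f_h|$. The forcing term $f_h = D^+h = -D^+(\chi)\mathcal{J}\eta - \chi\mathcal{J}\dot\eta$ is uniformly bounded on any compact slow-state set: Assumption \ref{assumption} bounds $\chi$ and $|D^+\chi|$, while $\eta \in \mathbb{S}^1$ and $|\dot\eta|$ is a continuous function of the bounded arguments $(\chi,\tilde{\xi},\eta)$. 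Hence, for every $\Delta_{\textup{s}}$ there is $K > 0$ with $|f_h| \leq K$ whenever $\sigma_{\textup{s}}(x_{\textup{s}}) \leq \Delta_{\textup{s}}$. Combined with the jump identity $|x_{\textup{f}}^+| = |x_{\textup{f}}|$ already verified in the excerpt, this gives the ISS-like bound $|\psi_{\textup{f}}(t,j)| \leq \exp(-t/\varepsilon)|x_{{\textup{f}},0}| + \varepsilon K'$ that matches the theorem statement up to an $O(\varepsilon)$ offset absorbable into $\delta$.

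For the slow subsystem I would observe that, setting $x_{\textup{f}} \equiv 0$ (equivalently $\tilde h \equiv 0$), the $x_{\textup{s}}$-dynamics is precisely $\mathcal{H}_0$ from \eqref{eq:reduced_hybrid}. Lemma \ref{lemma1} grants that $\mathcal{A}_0$ is uniformly pre-asymptotically stable with basin of preattraction $\mathbb{S}^1\times\mathbb{R}^2$, so by the standard $\mathcal{KL}$ characterization (\cite[Prop.~3.27]{goebel2012hybrid}) there exists $\beta_{\textup{s}} \in \mathcal{KL}$ and a proper indicator $\sigma_{\textup{s}}$ such that every solution of the reduced system satisfies $\sigma_{\textup{s}}(\psi_{\textup{s}}(t,j)) \leq \beta_{\textup{s}}(\sigma_{\textup{s}}(x_{{\textup{s}},0}), t+j)$. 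The perturbed system differs from the reduced one only through a term continuous in $x_{\textup{f}}$ entering both $F_{\textup{s}}$ and $G_{\textup{s}}$ (via $\tilde h = L(x_{\textup{f},1:2} + x_{\textup{f},3:4})$), and vanishing when $x_{\textup{f}} = 0$. The hybrid data therefore satisfy the hybrid basic conditions, and the attractor is robust in the sense of \cite[Chapter 7]{goebel2012hybrid}.

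The final step is to stitch the two bounds together by a standard singular-perturbation argument for hybrid systems. Restrict initial conditions to the set $\{|x_{{\textup{f}},0}| \leq \Delta_{\textup{f}},\ \sigma_{\textup{s}}(x_{{\textup{s}},0}) \leq \Delta_{\textup{s}},\ \rho_0 = 0\}$ and, using the KL bound on the nominal slow dynamics, identify a larger compact sublevel set of $\sigma_{\textup{s}}$ that is forward invariant for the perturbed slow dynamics as long as $\varepsilon$ is small enough. On this set, the fast bound of Step 1 applies uniformly and shows that $x_{\textup{f}}$ decays to an $O(\varepsilon)$ ball within a time independent of $\varepsilon$; from that point on the perturbation entering the slow map is uniformly small, and the robustness result of Step 2 delivers the second inequality with the required $\delta$. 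The main obstacle I expect is bookkeeping the interaction between the clock-driven jumps and the fast transient: one must ensure that jumps do not amplify $|x_{\textup{f}}|$ (already ruled out by the norm-preservation identity in the excerpt) and that the positive dwell time $1/\Lambda$ between resets prevents Zeno-like behavior, which together legitimize the application of the hybrid singular-perturbation template of \cite{sanfelice2011singular}.
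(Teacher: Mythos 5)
Your proposal is correct and follows essentially the route the paper intends: Theorem~\ref{main_thm} is stated without an explicit proof, but the material immediately preceding it (the identity $A_{\text{f}} + A_{\text{f}}^T = -2I_4$, the norm-preservation of $x_{\text{f}}$ across jumps, Lemma~\ref{lemma1} for the reduced system $\mathcal{H}_0$, and the appeal to the two-time-scale hybrid literature) is exactly the fast/slow decomposition plus singular-perturbation robustness argument you describe. The only point to tighten is that $f_h = D^+h$ also depends on $\tilde{h}$ through $\hat{\omega}_\chi$ in $\dot{\eta}$, so the bound $|f_h| \leq K$ must be taken over compact sets of both the slow and fast states (with the $x_{\text{f}}$-dependent part dominated by $-2\varepsilon^{-1}$ for small $\varepsilon$), a standard adjustment absorbed by the semiglobal practical framework.
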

Figure \ref{fig:sim_results} (plots (b),(e),(h),(k)) presents the simulation results corresponding to observer \eqref{eq:hybrid_obs_v1}, with $\Lambda$ selected as in Table \ref{tab:TabParMot}. Notably, it is possible to appreciate that this new solution enhances the convergence performance of the previous continuous-time algorithm. This is motivated by the intuition that the jumps, for $\Lambda$ sufficiently large, impose the position estimation error to be close during transients either to $\eta = (0, \;1)$ or to $\eta = (0, \; -1)$: these configurations are associated with the maximal value of $\dot{\hat{\xi}}$. For this reason, we can expect that there exists a range for initial conditions of $|\tilde{\xi}|$ where the convergence properties of this observer are optimized. In particular, this range is expected to be between very large initial errors, where the continuous time angular ``wraps'' dominate the behavior, and small initial errors, where jumps do not cause any estimation correction.
\subsection{A Mini-Batch Identifier for Enhanced Initial Convergence}
We conclude this section with a modification of the above strategy to ensure a faster observer response, obtained by means of a discrete-time identifier.
The need to employ a higher number of state variables, in addition to performing the minimization of a cost function, clearly makes this method more computationally intensive.
However, some strategies can be adopted to mitigate the online burden and enable implementation in embedded computing systems (e.g., moving the procedure in a lower priority/frequency task).\\
Firstly, recall that a perturbed estimate of $\chi \zeta_{\chi}$ can be computed as $\mathcal{C}[\hat{\zeta}_{\chi}]\mathcal{J}\hat{h}$.
From the solutions of system \eqref{eq:chi_frame} it can be noted that, for any positive scalar $T$, for all $t \geq T$:
\begin{equation}
\zeta_{\chi}(t) - \zeta_{\chi}(t-T) = \xi \mathcal{J} \int_{t - T}^{t}\chi(s)\zeta_{\chi}(s)ds.
\end{equation}
Hence by multiplying both sides by $\chi(t - T)\chi(t)$ it follows (let $y(s) = \chi(s)\zeta_{\chi}(s)$):
\begin{equation}\label{eq:lin_regr_pmsm}
\chi(t - T)y(t)- \chi(t)y(t - T) = \xi\chi(t - T)\chi(t)\mathcal{J} \int_{t - T}^{t}y(s)ds,
\end{equation}
which can be constructed by means of division-free estimates, since $\chi$ can be replaced with $|\hat{h}|$ and $y$ with $\mathcal{C}[\hat{\zeta}_{\chi}]\mathcal{J}\hat{h}$.
Indeed, between jumps of the clock \eqref{eq:timer}, we can compactly rewrite \eqref{eq:lin_regr_pmsm} as $X(t, j) + e_X(t, j) = (\Phi(t, j) + e_\Phi(t, j))\xi$, where $X$ and $\Phi$ are only function of $\hat{h}$, $\hat{\zeta}_\chi$, their past values and their integrals, while $e_X$ and $e_\Phi$ are disturbances depending on $h$ and $\tilde{h}$.
For $N \in \mathbb{N}_{\geq 1}$, let $\tau_N(\cdot)$ be a moving window operator such that, for a hybrid arc $\psi$ satisfying jumps according to the clock \eqref{eq:timer} (with $\rho(0, 0) = 0$), and for all $(t, j) \in \dom \psi$ such that $j \geq N$:
\begin{equation}
\tau(\psi)(t, j) = \begin{pmatrix}
\psi\left((j - N + 1)/\Lambda, j- N\right)\\
\vdots\\
\psi\left(j/\Lambda, j-1\right)
\end{pmatrix}.
\end{equation}
Choosing $T = 1/\Lambda$ as interval of integration in \eqref{eq:lin_regr_pmsm}, we thus obtain a simple estimate of $\xi$ through a batch least-squares algorithm as follows (see \cite{bin2019approximate} for the same structure in the context of output regulation):
\begin{equation}
\begin{split}
\xi^*(t, j) & =  \argmin_{\theta \in \mathbb{R}} J_N(\theta)(t, j)\\
J_N(\theta)(t, j) &\coloneqq \left| \tau_N(X)(t, j) - \tau_N(\Phi)(t, j)\theta \right|^2.
\end{split}
\end{equation}
To implement the above strategy, the hybrid observer in \eqref{eq:hybrid_obs_v1} is augmented with an identifier based on the shift register variables $Y^\mu = (Y_{0}^\mu, \ldots, Y_{N}^\mu) \in \mathbb{R}^{2(N+1)}$, $Z^\mu = (Z_{0}^\mu, \ldots, Z_{N}^\mu) \in \mathbb{R}^{N + 1}$, $\Phi^\mu = (\Phi_{1}^\mu, \ldots, \Phi_{N}^\mu) \in \mathbb{R}^{2N}$, related to the moving window operator as $\tau_N(\Phi) = \Phi^\mu$, $\tau_N(X) = (X_{1}^\mu, \ldots, X_{N}^\mu)$, $X_{i}^\mu = Z_{i-1}^\mu Y_{i}^\mu - Z_{i}^\mu Y_{i-1}^\mu$, $i \in \{1, \ldots, N\}$):
\begin{equation}\label{eq:batch}
\begin{split}
&\left\{\begin{split}
\dot{\nu} &= \mathcal{C}[\hat{\zeta}_\chi]\mathcal{J}\hat{h}\\
\dot{Y}^\mu &= 0\\
\dot{Z}^\mu &= 0\\
\dot{\Phi}^\mu &= 0
\end{split}\right. \qquad \qquad \qquad \qquad \qquad \;\; \rho \in [0, 1]\\
&\left\{\begin{split}
\nu^+ &= 0\\
(Y^\mu_{i})^+ &= Y_{i+1}^\mu, \qquad i \in \{ 0, \ldots, N \}\\
(Y^\mu_{N})^+ &= \mathcal{C}[\hat{\zeta}_{\chi}]\mathcal{J}\hat{h}\\
(Z^\mu_{i})^+ &= Z_{i+1}^\mu, \qquad i \in \{ 0, \ldots, N \}\\
(Z^\mu_{N})^+ &= |\hat{h}|\\
(\Phi^\mu_{i})^+ &= \Phi_{i+1}^\mu, \qquad i \in \{ 1, \ldots, N \}\\
(\Phi^\mu_{N})^+ &= \mathcal{J} \nu |\hat{h}|Z_N^\mu
\end{split}\right. \qquad \rho = 1\\
&\xi^*(t, j) = \mathcal{G}[Y^\mu, Z^\mu, \Phi^\mu](t, j) = \argmin_{\theta \in \mathbb{R}} J_N(\theta)(t, j),
\end{split}
\end{equation}
where the standard Moore-Penrose pseudoinverse can be used to minimize $J_N$.
The jump map of $\hat{\xi}$ can be then modified as a function of $\hat{\xi}(t, j)$ and $\xi^*(t, j)$.
Without intending to provide a formal stability result for this modification, which will be the topic of future research activity, we propose to jump according to two criteria, which are the ``readiness'' of the shift register and the error $\hat{\xi} - \xi^*$:
\begin{equation}\label{eq:batch_jump}
\hat{\xi}^+ = \left\{
\begin{split}
&\hat{\xi} \qquad j \leq N + 1 \text{ or }  |\hat{\xi} - \xi^*| \leq 4\sqrt{\gamma}\\
&\xi^* \qquad \text{ otherwise}.
\end{split}
\right.
\end{equation}
This way it is possible to ensure that, if the regression errors $e_X$, $e_\Phi$ are sufficiently small, the above jump improves the estimate $\hat{\xi}$ by guaranteeing $x_{\text{s}}^+$ to be close to the set $W_1 \leq 2$ in \eqref{eq:matrosov_fn} (where $4\sqrt{\gamma}$ was employed to account for the worst case scenario).
Within such set, the local behavior of the attitude observer becomes dominant, guaranteeing a desirable residual behavior.
Note that the errors $e_X$, $e_\Phi$ can be made arbitrarily small, for any jump of the overall system.
This is possible because $e_X$, $e_\Phi$ vanish as $\tilde{h} \to 0$ and, by proper selection of the gains of the fast subsystem, $\tilde{h}$ can be forced to converge during flows in an arbitrarily small ball, before the first jump occurs.\\
Finally, Figure \ref{fig:sim_results} (plots (c),(f),(i),(l)) presents the simulation results corresponding to the augmented observer \eqref{eq:hybrid_obs_v1}-\eqref{eq:batch}-\eqref{eq:batch_jump}, with $N$ chosen as in Table \ref{tab:TabParMot}. As expected, the proposed discrete-time identifier further improves on the previous solution in terms of estimation speed. Indeed, the fast reduction of $\tilde{\xi}$ is obtained after a brief waiting time according to \eqref{eq:batch_jump}.

\section{Conclusions}\label{sec:conclusions}

We presented a hybrid sensorless observer for PMSMs, with no a priori knowledge of the mechanical model.
The rotor speed was assumed to be an unknown input disturbance (endowed with some mild regularity assumptions) with constant, unknown sign and a persistently non-zero magnitude.
We showed that a clock, used to trigger an appropriate position estimation update, is sufficient to yield a semiglobal practical stability result in this challenging scenario.
Motivated by the improved convergence properties with respect to an initial continuous-time solution, we also proposed an estimation speed-up strategy based on a discrete-time identifier.
Future research effort will be dedicated to relaxing the required speed assumptions, as well as to further developments of the presented discrete-time identification technique.

\bibliographystyle{ieeetr}
{\bibliography{ifac2020_bibliography}}\

\begin{thebibliography}{10}

\bibitem{hilairet2009speed}
M.~Hilairet, F.~Auger, and E.~Berthelot, ``Speed and rotor flux estimation of
  induction machines using a two-stage extended {K}alman filter,'' {\em
  Automatica}, vol.~45, no.~1, pp.~1819--1827, 2009.

\bibitem{lee2013design}
H.~Lee and J.~Lee, ``Design of iterative sliding mode observer for sensorless
  {PMSM} control,'' {\em IEEE Transactions on Control Systems Technology},
  vol.~21, no.~4, pp.~1394--1399, 2013.

\bibitem{montanari2006speed}
M.~Montanari, S.~Paresada, and A.~Tilli, ``A speed-sensorless indirect
  field-oriented control for induction motors based on high gain speed
  estimation,'' {\em Automatica}, vol.~42, no.~10, pp.~1637--1650, 2006.

\bibitem{marino2008an}
R.~Marino, P.~Tomei, and C.~M. Verrelli, ``An adaptive tracking control from
  current measurements for induction motors with uncertain load torque and
  rotor resistance,'' {\em Automatica}, vol.~44, no.~2, pp.~2593--2599, 2008.

\bibitem{khalil2009speed}
H.~K. Khalil, E.~G. Strangas, and S.~Jurkovic, ``Speed observer and reduced
  nonlinear model for sensorless control of induction motors,'' {\em IEEE
  Transactions on Control Systems Technology}, vol.~17, no.~4, pp.~327--339,
  2009.

\bibitem{verrelli2018persistency}
C.~M. Verrelli, P.~Tomei, and E.~Lorenzani, ``Persistency of excitation and
  position-sensorless control of permanent magnet synchronous motors,'' {\em
  Automatica}, vol.~95, pp.~328--335, 2018.

\bibitem{ortega2019globally}
R.~Ortega, B.~Yi, K.~Nam, and J.~Choi, ``A globally exponentially stable
  position observer for interior permanent magnet synchronous motors,'' {\em
  arXiv preprint arXiv:1905.00833}, 2019.

\bibitem{bobtsov2015robust}
A.~A. Bobtsov, A.~A. Pyrkin, R.~Ortega, S.~N. Vukosavic, A.~M. Stankovic, and
  E.~V. Panteley, ``A robust globally convergent position observer for the
  permanent magnet synchronous motor,'' {\em Automatica}, vol.~61, pp.~47--54,
  2015.

\bibitem{tilli2019towards}
A.~Tilli, A.~Bosso, and C.~Conficoni, ``Towards sensorless observers for
  sinusoidal electric machines with variable speed and no mechanical model: A
  promising approach for pmsms,'' {\em Systems \& Control Letters}, vol.~123,
  pp.~16--23, 2019.

\bibitem{bosso2020robust}
A.~Bosso, A.~Tilli, and C.~Conficoni, ``A robust sensorless controller-observer
  strategy for pmsms with unknown resistance and mechanical model,'' {\em
  Submitted to the 21st IFAC World Congress}, 2020.

\bibitem{mayhew2011quaternion}
C.~G. Mayhew, R.~G. Sanfelice, and A.~R. Teel, ``Quaternion-based hybrid
  control for robust global attitude tracking,'' {\em IEEE Transactions on
  Automatic Control}, vol.~56, no.~11, pp.~2555--2566, 2011.

\bibitem{mahony2008nonlinear}
R.~Mahony, T.~Hamel, and J.-M. Pflimlin, ``Nonlinear complementary filters on
  the special orthogonal group,'' {\em IEEE Transactions on Automatic Control},
  vol.~53, no.~5, pp.~1203--1217, 2008.

\bibitem{sontag1999clocks}
E.~D. Sontag, ``Clocks and insensitivity to small measurement errors,'' {\em
  ESAIM: Control, Optimisation and Calculus of Variations}, vol.~4,
  pp.~537--557, 1999.

\bibitem{teel2003unified}
A.~R. Teel, L.~Moreau, and D.~Ne\u{s}i\'c, ``A unified framework for
  input-to-state stability in systems with two time scales,'' {\em IEEE
  Transactions on Automatic Control}, vol.~48, pp.~1526--1544, September 2003.

\bibitem{sanfelice2011singular}
R.~G. Sanfelice and A.~R. Teel, ``On singular perturbations due to fast
  actuators in hybrid control systems,'' {\em Automatica}, vol.~47, no.~4,
  pp.~629--701, 2011.

\bibitem{goebel2012hybrid}
R.~Goebel, R.~G. Sanfelice, and A.~R. Teel, {\em Hybrid dynamical systems:
  modeling, stability, and robustness}.
\newblock Princeton University Press, Princeton, NJ, 2012.

\bibitem{zaltni2010synchronous}
D.~Zaltni, M.~Ghanes, J.~P. Barbot, and M.~N. Abdelkrim, ``Synchronous motor
  observability study and an improved zero-speed position estimation design,''
  in {\em 49th IEEE Conference on Decision and Control}, pp.~5074--5079, IEEE,
  2010.

\bibitem{sanfelice2009asymptotic}
R.~G. Sanfelice and A.~R. Teel, ``Asymptotic stability in hybrid systems via
  nested matrosov functions,'' {\em IEEE Transactions on Automatic Control},
  vol.~54, no.~7, pp.~1569--1574, 2009.

\bibitem{bin2019approximate}
M.~Bin, P.~Bernard, and L.~Marconi, ``Approximate nonlinear regulation via
  identification-based adaptive internal models,'' {\em arXiv preprint
  arXiv:1907.05050}, 2019.

\end{thebibliography}
                                                   
\end{document}